
\documentclass[conference,a4paper]{IEEEtran}


\usepackage{amsfonts}
\usepackage{amssymb}
\usepackage{amsmath}
\usepackage{bbm}
\usepackage{bm}
\usepackage{dsfont}
\usepackage{graphicx}

\newcommand{\bra}[1]{#1^\dagger}
\newcommand{\ket}[1]{#1}
\newcommand{\braket}[2]{#1^\dagger #2}
\newcommand{\myu}{\bm{u}}
\newcommand{\myc}{\bm{c}}
\newcommand{\myf}{\bm{f}}
\newcommand{\myF}{\bm{F}}
\newcommand{\mypsi}{\bm{\psi}}
\newcommand{\myPsi}{\bm{\Psi}}

\newcommand{\myMat}{\bm{\Phi}}

\newtheorem{theorem}{Theorem}

\newtheorem{corollary}{Corollary}

\begin{document}

\sloppy

\title{An ``Umbrella'' Bound of the Lov\'asz-Gallager Type}

\author{
  \IEEEauthorblockN{Marco Dalai}
  \IEEEauthorblockA{Department of Information Engineering\\
    University of Brescia - Italy\\
    Email: marco.dalai@ing.unibs.it} 
}



\maketitle

\begin{abstract}
We propose a novel approach for bounding the probability of error of discrete memoryless channels with a zero-error capacity based on a combination of Lov\'asz' and Gallager's ideas. The obtained bounds are expressed in terms of a function $\vartheta(\rho)$, introduced here, that varies from the cut-off rate of the channel to the Lov\'azs theta function as $\rho$ varies from $1$ to $\infty$ and which is intimately related to Gallager's expurgated coefficient. 
The obtained bound to the reliability function, though loose in its present form, is finite for all rates larger than the Lov\'asz theta function.
\end{abstract}

\section{Introduction}

One of the most intriguing topic in coding theory is the problem of bounding the probability of error of optimal codes at low rates. While at high rates the asymptotic behaviour of the probability of error for optimal codes is now very well understood, very little is known in the low rate region.
Shannon \cite{shannon-1948} introduced the notion of channel capacity $C$, which  represents the largest rate at which information can be sent through the channel with probability of error that vanishes with increasing block-length. He then also introduced \cite{shannon-1956} the notion of zero-error capacity $C_0$ as the largest rate at which information can be sent with probability of error precisely equal to zero. For rates in the range $C_0<R<C$, the probability of error is known to decrease exponentially in the block-length $n$ as
\begin{equation}
P_e\approx e^{-n E(R)},
\end{equation}
where $E(R)$ is the so called reliability function of the channel.
Both determining $E(R)$ for small $R$ and even determining $C_0$ is an unsolved problem and only upper and lower bounds for these quantities are known. Lov\'asz gave an important improvement in upper bounding $C_0$ by means of his $\vartheta$ function, thus enlarging the range of values over which $E(R)$ is known to be finite. However, Lov\'asz's result was never exploited to find actual bounds to $E(R)$ for rates $R$ immediately above $\vartheta$.

In this paper we propose a first attempt to bound the probability of error at rates $R>\vartheta$ by combining Lov\'asz's method  with the idea used by Gallager in the development of his expurgated bound.
Even if the proposed approach does not lead yet to good bounds to the reliability function, we believe it sheds some light on this relatively unexplored topic.


\section{Basic notions}
\subsection{Reliability of DMCs}
Let $W(x|y)$, $x\in \mathcal{X}$, $y\in\mathcal{Y}$, be the transition probabilities of a discrete memoryless channel $W :\mathcal{X}\to\mathcal{Y}$, where $\mathcal{X}=\{1,2,\ldots,K\}$ and $\mathcal{Y}=\{1,2,\ldots,J\}$ are finite sets. For a sequence $\bm{x}=(x_1,x_2,\ldots,x_n)\in\mathcal{X}^n$ and a sequence $\bm{y}=(y_1,y_2,\ldots,y_n)\in\mathcal{Y}^n$, the probability of observing $\bm{y}$ at the output of the channel given $\bm{x}$ at the input is 
\begin{equation}
W^{(n)}(\bm{y}|\bm{x})=\prod_{i=1}^n W(y_i|x_i).
\end{equation}
A block code with $M$ messages and block-length $n$ is a mapping from a set $\{1,2,\ldots,M\}$ of $M$ messages onto a set $\{\bm{x}_1, \bm{x}_2, \ldots, \bm{x}_M\}$ of $M$ sequences in $\mathcal{X}^n$. The rate $R$ of the code is defined as $R=\log M/n$.
A decoder is a mapping from $\mathcal{Y}^n$ into the set of possible messages $\{1,2,\ldots,M\}$. If message $m$ is to be sent, the encoder transmits  the codeword $\bm{x}_m$ through the channel. An output sequence $\bm{y}$ is received by the decoder, which maps it to a message $\hat{m}$. An error occurs if $\hat{m}\neq m$.

Let $Y_m\subseteq \mathcal{Y}^n$ be the set of output sequences that are mapped into message $m$. When message $m$ is sent, the probability of error is
\begin{equation}
P_{e|m}=\sum_{\bm{y}\notin Y_m} W^{(n)}(\bm{y}|\bm{x}_m).
\end{equation}
The maximum error probability of the code is defined as the largest $P_{e|m}$, that is,
\begin{equation}
P_{e,\max}=\max_{m}P_{e|m}.
\end{equation}

Let $P_{e,\max}^{(n)}(R)$ be the smallest maximum error probability among all codes of length $n$ and rate at least $R$.
Shannon's theorem \cite{shannon-1948} states that sequences of codes exist such that $P_{e,\max}^{(n)}(R)\to 0$ as $n\to\infty$ for all rates smaller than a constant $C$, called \emph{channel capacity}.
For $R<C$, Shannon's theorem only asserts that $P_{e,\max}^{(n)}(R)\to 0$ as $n\to\infty$.  
For a range of rates  $C_0< R < C$, the optimal probability of error $P_{e,\max}^{(n)}(R)$ is known to have an exponential decrease in $n$, and it is thus useful to define the \emph{reliability function} of the channel as
\begin{equation}
E(R)=\limsup_{n\to\infty} -\frac{1}{n}\log P_{e,\max}^{(n)}(R).
\label{eq:E(R)_def_class}
\end{equation} 
The value $C_0$ is the so called \emph{zero-error capacity}, also introduced by Shannon \cite{shannon-1956}, which is defined as the highest rate at which communication is possible with probability of error precisely equal to zero. More formally,
\begin{equation}
C_0=\sup\{R \, :\,   P_{e,\max}^{(n)}(R)=0 \mbox{ for some } n\}.
\end{equation}
For $R<C_0$, we may define the reliability function $E(R)$ as being infinite. Note that $C_0>0$ if and only if there are at least two input symbols $x$ and $x'$ which are not confusable at the output, meaning that $W(y|x)W(y|x')$ is zero for all values of $y$.
Determining the reliability function $E(R)$ (at low positive rates) and the zero-error capacity $C_0$ of a general channel is still an unsolved problem.

One of the most famous results in this direction is Lov\'asz's upper bound to $C_0$. 
Lov{\'a}sz proves that $C_0$ is upper bounded by a quantity $\vartheta$ defined as
\begin{align*}
\vartheta& = \min_{\{\myu_x\}}\min_{\myc} \max_x \log \frac{1}{\,|\braket{\myu_x }{\myc}|^2}
\end{align*}
where $\{\myu_x\}_{x\in\mathcal{X}}$ runs over all sets of unit norm vectors in any Hilbert space such that $\myu_x$ and $\myu_{x'}$ are orthogonal if symbols $x$ and $x'$ are not confusable and $\myc$ runs over all unit norm vectors. Here,  $\cdot^\dagger$ denotes conjugate transpose and   $\braket{\myu_x }{\myc}$ is the scalar product between $\ket{\myu_x}$ and $\ket{\myc}$.

\subsection{Bhattacharyya distances and scalar products}
\label{sec:Umbrella-Bhattacharyya}
Here, we briefly recall some important connections between the reliability function $E(R)$ and the Bhattacharyya distance between codewords. 
This connection is of great importance since the Bhattacharyya distance between distributions is related to a scalar product between unit norm vectors in a Hilbert space. It is this property that creates an underlying common substrate for Lov\'asz's approach and for bounding the reliability function.
 
For a generic input symbol $x$, consider the unit norm $|\mathcal{Y}|$-dimensional column vector $\mypsi_x$ with components $\mypsi_x(y)=\sqrt{W(y|x)}$. We call this the \emph{state vector} of input symbol $x$, in analogy with the input signals of pure-state classical-quantum channels (see comment at the end of Section \ref{sec:connections}). In the same way, for an input sequence $\bm{x}=(x_1,x_2,\ldots,x_n)$, consider the unit norm $|\mathcal{Y}|^n$-dimensional column vector $\myPsi_{\bm{x}}$ whose components are the values
$\sqrt{W^{(n)}(\bm{y}|\bm{x})}$, that is, $\myPsi_{\bm{x}}$ is simply the element-wise square root of the conditional output distribution given the input sequence $\bm{x}$. Then, since the channel is memoryless, we can write
\begin{equation}
\myPsi_{\bm{x}}=\mypsi_{x_1}\otimes\mypsi_{x_2}\otimes\cdots\mypsi_{x_n}
\label{eq:defPsi}
\end{equation}
where $\otimes$ is the Kronecker product. Let for ease of notation $\myPsi_m$ be the state vector of the codeword $\bm{x}_m$; then we can represent our code $\{\bm{x}_1, \bm{x}_2, \ldots, \bm{x}_M\}$ by means of their associated state vectors $\{\myPsi_1, \myPsi_2, \ldots, \myPsi_M\}$. 
Since all square roots are taken positive, note that our channel has a positive zero-error capacity if and only if there are at least two state vectors $\mypsi_x$, $\mypsi_{x'}$ such that $\braket{\mypsi_x}{ \mypsi_{x'}}=0$. This implies that codes can be 
built such that $\braket{\myPsi_m}{\myPsi_{m'}}=0$ for some $m$, $m'$, that is, the two codewords $m$ and $m'$ cannot be confused at the output. However, the scalar product 
$\braket{\myPsi_m}{\myPsi_{m'}}$ plays a more general role since it is related to the so called Bhattacharyya distance between the two codewords $m$ and $m'$. In particular, in a binary hypothesis testing between codeword $m$ and $m'$, an extension of the Chernoff Bound allows to assert that the minimum error probability asymptotically satisfies \cite{shannon-gallager-berlekamp-1967-2}
\begin{equation}
P_e \doteq \min_{0\leq s\leq 1}\sum_{\bm{y}}W^{(n)}(\bm{y}|\bm{x}_m)^{1-s}W^{(n)}(\bm{y}|\bm{x}_{m'})^s
\end{equation}
where $\doteq$ means equivalence to the first order in the exponent.
For $s=1/2$, the sum above obviously equals $\braket{\myPsi_m}{\myPsi_{m'}}$.
It is easily shown that the minimum above is always between $(\braket{\myPsi_m}{\myPsi_{m'}})^2$ and $\braket{\myPsi_m}{\myPsi_{m'}}$, and it equals the latter for a class of channels, called pairwise reversible channels, that have some symmetry with respect to the input symbols\footnote{Somehow tautologically, pairwise reversible channels are those for which the minimum is achieved for $s=1/2$.} \cite{shannon-gallager-berlekamp-1967-2}. Obviously, for a given code, the probability of error $P_{e,\max}$ is lower bounded by the probability of error in each binary hypothesis test between two codewords. Hence, we find that $P_{e,\max}$ asymptotically satisfies
\begin{equation}
-\frac{1}{n}\log P_{e,\max} \leq -\frac{2}{n} \log \max_{m\neq m'} (\braket{\myPsi_m}{\myPsi_{m'}}) + o(n),
\label{eq"pemaxprodscal1}
\end{equation}
where the coefficient 2 can be removed if the channel is pairwise reversible.
It is thus obvious that it is possible to upper bound $E(R)$ by lower bounding the quantity
\begin{equation}
\gamma=\max_{m\neq m'} \braket{\myPsi_m}{\myPsi_{m'}}
\end{equation}
Lov\'asz's work aims at finding a value $\vartheta$ as small as possible that allows to conclude that, for a set of $M=e^{nR}>e^{n\vartheta}$ codewords, $\gamma$ cannot be zero, and thus at least two codewords are confusable. Here, instead, we want something more, that is, finding a lower bound on $\gamma$ for each code with rate $R>\vartheta$ so as to deduce an upper bound to $E(R)$ for all $R>\vartheta$.

\section{An ``umbrella'' bound}
\label{sec:Umbrella-umbrella}
Consider the scalar products between the channel state vectors $\braket{\mypsi_x}{\mypsi_{x'}}\geq 0$. For a fixed $\rho\geq 1$, consider then a set of ``tilted'' state vectors, that is, unit norm vectors $\{\tilde{\mypsi}_x\}$ in any Hilbert space such that $|\braket{\tilde{\mypsi}_x}{\tilde{\mypsi}_{x'}}|\leq (\braket{{\mypsi}_x}{{\mypsi}_{x'}})^{1/\rho}$. We call such a set of vectors $\{\tilde{\mypsi}_x\}$ an \emph{orthonormal representation of degree $\rho$} of our channel, and call $\Gamma(\rho)$ the set of all possible such representations\\
\begin{equation}
\Gamma(\rho) = \left\{ \{\tilde{\mypsi}_x\} \, :\,  |\braket{\tilde{\mypsi}_x}{\tilde{\mypsi}_{x'}}|\leq (\braket{{\mypsi}_x}{{\mypsi}_{x'}})^{1/\rho}\right\}, 	\quad \rho\geq 1.
\end{equation}
Observe that $\Gamma(\rho)$ is non-empty since the original $\mypsi_k$ vectors satisfy the constraints.
The \emph{value} of an orthonormal representation is the quantity 
\begin{equation}
V(\{\tilde{\mypsi}_x\})=\min_{\myf}\max_x\log \frac{1}{|\braket{\tilde{\mypsi}_x}{\myf}|^2},
\end{equation}
where the minimum is over all unit norm vectors $\myf$. The optimal choice of the vector $\myf$ is called, with Lov\'asz, the handle of the representation. We call it $\myf$ to point out that this vector plays essentially the same role as the auxiliary output distribution $\mathbf{f}$ used in the sphere-packing bound of \cite{shannon-gallager-berlekamp-1967-1}. Due to space limitation, we cannot discuss this detail here; see the comment at the end of Section \ref{sec:connections}.

Call now $\vartheta(\rho)$ the minimum value over all representations of degree $\rho$, 
\begin{align}
\vartheta(\rho) & = \min_{\{\tilde{\mypsi}_x\} \in \Gamma(\rho)}V(\{\tilde{\mypsi}_x\}).
\end{align}

We have the following result.

\begin{theorem}
For any code of block-length $n$ with $M$ codewords and any $\rho\geq 1$ we have
\begin{equation}
\max_{m}\sum_{m'\neq m}(\braket{\myPsi_m}{\myPsi_{m'}})  \geq \frac{\left( M e^{-n\vartheta(\rho)} -1\right)^\rho}{(M-1)^{\rho-1}}
\end{equation}
\label{th}
\end{theorem}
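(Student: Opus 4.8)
The plan is to lift the optimal single-letter representation attaining $\vartheta(\rho)$ to block length $n$ by tensor products, and then carry out a Lov\'asz-style two-sided estimate of the squared norm of the sum of the lifted codeword vectors, finishing with H\"older's inequality. First I would fix a representation $\{\tilde{\mypsi}_x\}\in\Gamma(\rho)$ attaining the minimum $\vartheta(\rho)$, together with its optimal handle $\myf$, so that $|\braket{\tilde{\mypsi}_x}{\myf}|^2\geq e^{-\vartheta(\rho)}$ for every $x$. For each codeword $\bm{x}_m=(x_{m,1},\dots,x_{m,n})$ I set $\tilde{\myPsi}_m=\tilde{\mypsi}_{x_{m,1}}\otimes\cdots\otimes\tilde{\mypsi}_{x_{m,n}}$ and $\myF=\myf\otimes\cdots\otimes\myf$. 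Since scalar products multiply over Kronecker products, the handle bound factorizes to $|\braket{\tilde{\myPsi}_m}{\myF}|^2\geq e^{-n\vartheta(\rho)}$, and the degree-$\rho$ constraint factorizes to $|\braket{\tilde{\myPsi}_m}{\tilde{\myPsi}_{m'}}|\leq (\braket{\myPsi_m}{\myPsi_{m'}})^{1/\rho}$, using that $\braket{\myPsi_m}{\myPsi_{m'}}=\prod_i \braket{\mypsi_{x_{m,i}}}{\mypsi_{x_{m',i}}}$. After multiplying each $\tilde{\myPsi}_m$ by a unit phase, which alters neither norms nor the moduli of cross products, I may assume $\braket{\tilde{\myPsi}_m}{\myF}=|\braket{\tilde{\myPsi}_m}{\myF}|\geq e^{-n\vartheta(\rho)/2}$.

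The core of the argument is to bound $\|\sum_m \tilde{\myPsi}_m\|^2$ from both sides. Projecting onto the unit vector $\myF$ and applying Cauchy--Schwarz gives the lower bound $\|\sum_m\tilde{\myPsi}_m\|^2\geq (\sum_m\braket{\tilde{\myPsi}_m}{\myF})^2\geq M^2 e^{-n\vartheta(\rho)}$, while expanding the norm and isolating the diagonal gives $\|\sum_m\tilde{\myPsi}_m\|^2 = M + \sum_{m\neq m'}\braket{\tilde{\myPsi}_m}{\tilde{\myPsi}_{m'}}$. The cross sum is real, hence bounded above by $\sum_{m\neq m'}|\braket{\tilde{\myPsi}_m}{\tilde{\myPsi}_{m'}}|\leq \sum_{m\neq m'}(\braket{\myPsi_m}{\myPsi_{m'}})^{1/\rho}$. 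Combining the two estimates yields $\sum_{m\neq m'}(\braket{\myPsi_m}{\myPsi_{m'}})^{1/\rho}\geq M^2 e^{-n\vartheta(\rho)}-M$.

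Finally I would convert this power-sum bound into the claimed form row by row. Writing $T=\max_m\sum_{m'\neq m}\braket{\myPsi_m}{\myPsi_{m'}}$ and noting that all overlaps are nonnegative reals, H\"older's inequality (with exponents $\rho$ and $\rho/(\rho-1)$) gives, for each $m$, $\sum_{m'\neq m}(\braket{\myPsi_m}{\myPsi_{m'}})^{1/\rho}\leq (M-1)^{1-1/\rho}(\sum_{m'\neq m}\braket{\myPsi_m}{\myPsi_{m'}})^{1/\rho}\leq (M-1)^{1-1/\rho}T^{1/\rho}$. Summing over the $M$ rows and dividing by $M$ yields $Me^{-n\vartheta(\rho)}-1\leq (M-1)^{1-1/\rho}T^{1/\rho}$, and raising both sides to the power $\rho$ rearranges into the statement.

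I expect the only delicate points to be the clean factorization of \emph{both} constraints over the tensor product and the phase alignment that turns the projection onto $\myF$ into a sum of nonnegative reals; the concluding H\"older step is routine. The raising-to-the-power-$\rho$ step requires $Me^{-n\vartheta(\rho)}-1\geq 0$, i.e. the regime $R>\vartheta(\rho)$, which is exactly where the bound is of interest; outside it the right-hand side is nonpositive and the inequality holds trivially since its left-hand side is nonnegative.
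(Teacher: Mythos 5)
Your proof is correct, and while it shares the paper's skeleton---tensor-lift the optimal degree-$\rho$ representation and its handle, use $|\braket{\tilde{\myPsi}_m}{\myF}|^2\geq e^{-n\vartheta(\rho)}$, compare tilted overlaps with $(\braket{\myPsi_m}{\myPsi_{m'}})^{1/\rho}$, and finish with a convexity inequality---the central linear-algebra step is genuinely different. The paper forms $\myMat=\left(\tilde{\myPsi}_1,\ldots,\tilde{\myPsi}_M\right)/\sqrt{M}$, averages the handle bound to get $\bra{\myF}\myMat\myMat^\dagger\ket{\myF}\geq e^{-n\vartheta(\rho)}$, passes to $\lambda_{\max}\left(\myMat^\dagger\myMat\right)\geq e^{-n\vartheta(\rho)}$ (equality of the nonzero spectra of $\myMat\myMat^\dagger$ and $\myMat^\dagger\myMat$), and invokes the spectral bound $\lambda_{\max}(A)\leq\max_i\sum_j|A_{i,j}|$, which lower-bounds the \emph{maximum} tilted row sum directly. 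You instead phase-align the $\tilde{\myPsi}_m$ so that $\braket{\tilde{\myPsi}_m}{\myF}\geq e^{-n\vartheta(\rho)/2}$, sandwich $\|\sum_m\tilde{\myPsi}_m\|^2$ between $M^2e^{-n\vartheta(\rho)}$ (Cauchy--Schwarz against $\myF$) and $M+\sum_{m\neq m'}|\braket{\tilde{\myPsi}_m}{\tilde{\myPsi}_{m'}}|$, which lower-bounds only the \emph{total} off-diagonal sum, and then recover the max by bounding each row through H\"older (your H\"older step with exponents $\rho$ and $\rho/(\rho-1)$ is exactly the paper's Jensen step in disguise, so the final bound is identical). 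What your route buys is elementarity: no eigenvalue facts at all, only Cauchy--Schwarz. The price is the phase-alignment trick, which is genuinely necessary in a complex Hilbert space (without it the terms of $\sum_m\braket{\tilde{\myPsi}_m}{\myF}$ could cancel) and which the paper's rank-one averaging sidesteps automatically, since it only ever averages the moduli $|\braket{\tilde{\myPsi}_m}{\myF}|^2$. Your closing observation about the regime $Me^{-n\vartheta(\rho)}\leq 1$, where raising to the power $\rho$ needs a nonnegative base, is a point the paper's own derivation glosses over.
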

\begin{corollary}
For the reliability function of a general DMC we have the bound
\begin{equation}
E(R)\leq 2\rho\,\vartheta(\rho), \qquad R>\vartheta(\rho),
\label{eq:bound_theta2_1}
\end{equation}
where the coefficient 2 can be removed if the channel is pairwise reversible.
\label{coroll}
\end{corollary}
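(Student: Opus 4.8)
The plan is to read off Corollary~\ref{coroll} from Theorem~\ref{th}, which I take as given. By \eqref{eq"pemaxprodscal1} the reliability function is controlled through $-\tfrac{2}{n}\log\gamma$, where $\gamma=\max_{m\neq m'}\braket{\myPsi_m}{\myPsi_{m'}}$ (the coefficient $2$ disappears for pairwise reversible channels). Hence it suffices to show that every code of rate at least $R>\vartheta(\rho)$ obeys $-\tfrac1n\log\gamma\le\rho\,\vartheta(\rho)+o(1)$, and then take the $\limsup$ in $n$.

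First I would turn the row-sum of Theorem~\ref{th} into a statement about the single largest inner product. Since each of the $M-1$ summands is at most $\gamma$,
\begin{equation}
\max_m\sum_{m'\neq m}\braket{\myPsi_m}{\myPsi_{m'}}\ \le\ (M-1)\,\gamma ,
\end{equation}
so Theorem~\ref{th} gives
\begin{equation}
\gamma\ \ge\ \left(\frac{M\,e^{-n\vartheta(\rho)}-1}{M-1}\right)^{\!\rho}.
\end{equation}

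Next I would insert $M\ge e^{nR}$ and extract the exponent. Setting $g(M)=\bigl(M e^{-n\vartheta(\rho)}-1\bigr)/(M-1)$ one finds $g'(M)=\bigl(1-e^{-n\vartheta(\rho)}\bigr)/(M-1)^2\ge 0$, so $g$ is nondecreasing and the guarantee is weakest at the smallest admissible value $M=e^{nR}$, where
\begin{equation}
g(e^{nR})=\frac{e^{n(R-\vartheta(\rho))}-1}{e^{nR}-1}\ \doteq\ e^{-n\vartheta(\rho)},
\end{equation}
the first-order exponent holding precisely because $R>\vartheta(\rho)$ renders both subtracted $1$'s asymptotically negligible. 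Consequently $-\tfrac1n\log\gamma\le\rho\,\vartheta(\rho)+o(1)$ for every code of rate at least $R$, and \eqref{eq"pemaxprodscal1} yields $E(R)\le 2\rho\,\vartheta(\rho)$, with the factor $2$ removable in the pairwise-reversible case. This is the assertion.

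The real work sits in Theorem~\ref{th}, which I assume; the only delicate point in the corollary itself is the asymptotic/monotonicity step, where the hypothesis $R>\vartheta(\rho)$ (equivalently the positivity of $M e^{-n\vartheta(\rho)}-1$) is exactly what guarantees the bracketed ratio has exponent $\vartheta(\rho)$ uniformly over all admissible $M$. For completeness, the theorem would itself be proved by a Lov\'asz-type argument on the tilted codeword vectors $\tilde{\myPsi}_m=\tilde{\mypsi}_{x_1}\otimes\cdots\otimes\tilde{\mypsi}_{x_n}$ and the product handle $\myF=\myf\otimes\cdots\otimes\myf$: from $|\braket{\tilde{\myPsi}_m}{\myF}|^2\ge e^{-n\vartheta(\rho)}$ and a phase-aligned combination $\bigl\|\sum_m e^{i\theta_m}\tilde{\myPsi}_m\bigr\|^2$ one lower-bounds $\sum_{m,m'}|\braket{\tilde{\myPsi}_m}{\tilde{\myPsi}_{m'}}|$, and a convexity (power-mean) step together with $|\braket{\tilde{\myPsi}_m}{\tilde{\myPsi}_{m'}}|\le(\braket{\myPsi_m}{\myPsi_{m'}})^{1/\rho}$ converts this back into the stated row-sum bound.
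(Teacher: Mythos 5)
Your derivation is correct and follows essentially the same route as the paper: both pass from the theorem's row sum to $\gamma$ via max $\geq$ average (your $(M-1)\gamma$ bound is the same step), insert $M\geq e^{nR}$, extract the exponent $\rho\,\vartheta(\rho)$ using $R>\vartheta(\rho)$, and conclude via \eqref{eq"pemaxprodscal1}; your monotonicity-in-$M$ argument is just a rephrasing of the paper's elementary estimate $\bigl(Me^{-n\vartheta(\rho)}-1\bigr)/(M-1)\geq e^{-n\vartheta(\rho)}-e^{-nR}$.
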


\begin{proof}
For an input sequence $\bm{x}=(x_1,x_2,\ldots,x_n)$ call, in analogy with \eqref{eq:defPsi},   $\tilde{\myPsi}_{\bm{x}}=\tilde{\mypsi}_{x_1}\otimes\tilde{\mypsi}_{x_2}\otimes\cdots\tilde{\mypsi}_{x_n}$. Observe first that, for any two input sequences $\bm{x}$ and $\bm{x}'$, we have
\begin{align}
|\braket{\tilde{\myPsi}_{\bm{x}}}{\tilde{\myPsi}_{\bm{x}'}}| & = \prod_{i=1}^n|\braket{\tilde{\mypsi}_{x_i}}{\tilde{\mypsi}_{x_i'}}|\\
& \leq  \prod_{i=1}^n(\braket{{\mypsi}_{x_i}}{{\mypsi}_{x_i'}})^{1/\rho}\\
& = (\braket{\tilde{\myPsi}_{\bm{x}}}{\tilde{\myPsi}_{\bm{x}'}})^{1/\rho}
\end{align}
Furthermore, note that, for an optimal representation of degree $\rho$ with handle $\myf$, we have $|\braket{\tilde{\mypsi}_x}{\myf}|^2\geq e^{-\vartheta(\rho)}$, $\forall x$. Set now $\myF=\myf^{\otimes N}$. We then have
\begin{eqnarray}
|\braket{\tilde{\myPsi}_{\bm{x}}}{\myF}|^2 & = & \prod_{i=1}^n|\braket{\tilde{\mypsi}_{x_i}}{\myf}|^2\\
& \geq & e^{-n\vartheta(\rho)} \label{eq:tilted_vs_F}.
\end{eqnarray}
Let us first check how Lov\'asz's bound is obtained. Lov\'asz's approach is to bound the number $M$ of codewords with orthogonal state vectors, using the property that if $\tilde{\myPsi}_1,\tilde{\myPsi}_2,\ldots\tilde{\myPsi}_M$ form a set of orthonormal vectors, then
\begin{eqnarray}
1 & = & \|\myF\|_2^2 \\
& \geq & \sum_m |\braket{\tilde{\myPsi}_{m}}{\myF}|^2\\
& \geq & Me^{-n\vartheta(\rho)}.
\end{eqnarray}
Hence, if $M>e^{n\vartheta(\rho)}$, there are at least two non-orthogonal vectors in the set, say $|\braket{\tilde{\myPsi}_m}{\tilde{\myPsi}_{m'}}|^2>0$. But this implies that $(\braket{{\myPsi}_m}{{\myPsi}_{m'}})^2\geq|\braket{\tilde{\myPsi}_m}{\tilde{\myPsi}_{m'}}|^{2\rho}>0$. Hence, if $R>\vartheta(\rho)$, no zero-error code can exist. We still have the freedom in the choice of $\rho$ and it is obvious that larger values of $\rho$ can only give better results. Hence, it is preferable to simply work in the limit of $\rho\to \infty$ and thus build the representation $\tilde{\mypsi}_1, \tilde{\mypsi}_2,\ldots, \tilde{\mypsi}_K$ under the only constraint that $| \braket{\tilde{\mypsi}_x}{\tilde{\mypsi}_{x'}}| = 0$ whenever $| \braket{{\mypsi}_{x}}{{\mypsi}_{x'}}|=0$. This gives precisely Lov\'asz' result.

Now, instead of bounding $R$ under the hypothesis of zero-error communication, we want to bound the probability of error for a given $R>\vartheta(\rho)$.
Considering the tilted state vectors of the code, we can rewrite equation \eqref{eq:tilted_vs_F} as
\begin{eqnarray}
|\braket{\tilde{\myPsi}_{m}}{\myF}|^2  &  = & \bra{\myF}\left(\ket{\tilde{\myPsi}_m}\bra{\tilde{\myPsi}_m}\right)\ket{\myF}
\\ & \geq & e^{-n\vartheta(\rho)}.
\end{eqnarray}
The second expression above has the benefit of easily allowing averaging this expression over different codewords. So, we can average this expression over all $m$ and, defining the matrix $\myMat=\left(\tilde{\myPsi}_1,\ldots,\tilde{\myPsi}_M\right)/\sqrt{M}$, we get
\begin{equation}
\bra{\myF}\myMat \myMat^\dagger \ket{\myF}\geq e^{-n\vartheta(\rho)}.
\end{equation}
Since $\myF$ is a unit norm vector, this implies that the matrix $\myMat \myMat^\dagger$ has at least one eigenvalue larger than or equal to $e^{-n\vartheta(\rho)}$. This in turn implies that also the matrix  $\myMat^\dagger \myMat$ has itself an eigenvalue larger than or equal to $e^{-n\vartheta(\rho)}$, that is
\begin{equation}
\lambda_{\max}\left(\myMat^\dagger \myMat\right)\geq e^{-n\vartheta(\rho)}.
\end{equation}
It is known that for a given matrix $A$ with elements $\{A_{i,j}\}$, the following inequality holds
\begin{equation}
\lambda_{\max}(A)\leq \max_{i}\sum_j|A_{i,j}|.
\end{equation}
Using this inequality with $A=\myMat^\dagger \myMat$, since $A_{i,j}=\braket{\tilde{\myPsi}_i}{\tilde{\myPsi}_{j}}/M$, we get
\begin{eqnarray}
e^{-n\vartheta(\rho)} & \leq & 
\max_i\sum_j \frac{ |\braket{\tilde{\myPsi}_i}{\tilde{\myPsi}_{j}}| } {M}\\& = & \frac{1}{M}\left(1+\max_i\sum_{j\neq i}
|\braket{\tilde{\myPsi}_i}{\tilde{\myPsi}_{j}}|\right).
\end{eqnarray}
We then deduce
\begin{align}
\frac{Me^{-n\vartheta(\rho)}-1}{M-1} & \leq \max_i\frac{1}{M-1}\sum_{j\neq i}
|\braket{\tilde{\myPsi}_i}{\tilde{\myPsi}_{j}}|\\
&\leq \max_i\frac{1}{M-1}\sum_{j\neq i}
\left(\braket{\myPsi_i}{\myPsi_{j}}\right)^{1/\rho}\\
& \leq \max_i\left(\frac{1}{M-1}\sum_{j\neq i}
\braket{\myPsi_i}{\myPsi_{j}}\right)^{1/\rho},
\end{align}
where the last step is due to the Jensen inequality, since $\rho\geq 1$.
Extracting the sum from this inequality we obtain the inequality stated in the theorem.

To prove the corollary, simply note that 
\begin{align}
\max_{i\neq j}  \braket{\myPsi_i}{\myPsi_{j}} & \geq \max_i \frac{1}{M-1}\sum_{j\neq i}
|\braket{\myPsi_i}{\myPsi_{j}}|\\
& \geq \left(\frac{Me^{-n\vartheta(\rho)}-1}{M-1}\right)^{\rho}\label{eq:maxprodscalavrg}\\
& \geq \left( e^{-n\vartheta(\rho)}-e^{-nR}\right)^{\rho}.
\end{align}
The bound is trivial if $R\leq \vartheta(\rho)$. For $R>\vartheta(\rho)$, instead, the second term in the parenthesis decreases exponentially faster than the first, which leads us to the conclusion that 
\begin{equation}
-\frac{1}{n}\log \max_{m\neq m'} \braket{{\myPsi}_m}{{\myPsi}_{m'}}\leq \rho \vartheta(\rho) + o(1).
\end{equation}
The bounds in terms of $E(R)$ are then obtained by simply taking the limit $n\to\infty$ and using the bound \eqref{eq"pemaxprodscal1}.
\end{proof}
%
%

We close this section with a comment on the computation of the function $\vartheta(\rho)$. There is no essential difference with respect to the evaluation of the Lov\'asz theta function. The optimal representation $\{\tilde{\mypsi}_x\}$ for any fixed $\rho$, can be obtained by solving a semidefinite optimization problem. If we consider the $(K+1)\times (K+1)$ Gram matrix
\begin{equation}
G=[\tilde{\mypsi}_1,\ldots,\tilde{\mypsi}_K, \myf]^\dagger[\tilde{\mypsi}_1,\ldots,\tilde{\mypsi}_K, \myf]
\end{equation}
we note that finding the optimal representation amounts to solving the problem
\begin{equation}
\begin{array}{lrcl}
& \max  & V & \\
\mbox{s.t.} & G(k,K+1) & \geq  &V, \quad \forall k\leq K\\
& G(k,k) & =  & 1, \quad \forall k\\
& G(k,i) & \leq & \braket{\mypsi_k}{\mypsi_i}^{1/\rho}\\
& & & 1\leq k <K,\, k<i\leq K\\
 &  G & \mbox{is} & \mbox{positive semidefinite}
 \end{array}
\end{equation}
The solution to this problem is $V=\vartheta(\rho)$ and both the representation vectors $\{\tilde{\mypsi}_x\}$ and the handle $\myf$ can be obtained by means of the spectral decomposition of the optimal $G$ found.

\section{Connections with other results in channel theory}
\label{sec:connections}
A first important comment abount $\vartheta(\rho)$ concerns the result obtained for $\rho=1$; the value $\vartheta(1)$ is in fact simply the cut-off rate of the channel. Indeed, for $\rho=1$, we can without loss of generality use the obvious representation $\tilde{\mypsi}_x=\mypsi_x, \forall x$, since any different optimal representation will simply be a rotation of this (or an equivalent description in a space with a different dimension). In this case, all the components of all the vectors $\{\tilde{\mypsi}_x\}$ are non-negative and this easily implies that the optimal $f$ can as well be chosen with non-negative components, since changing a supposedly negative component of $\myf$ to its absolute value can only improve the result.
Thus, $\myf$ can be written as the square root of a probability distribution $Q$ on $\mathcal{Y}$, and we have 
\begin{eqnarray}
\vartheta(1) & = & \min_{\myf}\max_x\log \frac{1}{|\braket{{\mypsi}_x}{\myf}|^2}\\
& = & \min_{Q} \max_x \left(-2\log \sum_{y}\sqrt{W(y|x)Q(y)}\right)
\end{eqnarray}
where the minimum is now over all probability distributions $Q$.
As observed by Csisz\'ar \cite[Proposition 1, with $\alpha=1/2$]{csiszar-1995}, this expression equals the cut-off rate
 $R_1$ of the channel defined as
\begin{eqnarray*}
R_1 & = & \max_{P}- \log \sum_{x,x'}P(x)P(x')\left(\sum_y \sqrt{W(y|x)W(y|x')}\right)\\
& = & \max_{P}- \log \sum_{x,x'}P(x)P(x') \braket{\mypsi_x}{\mypsi_{x'}}
\end{eqnarray*}

Another important characteristic of the function $\vartheta(\rho)$ is observed in the limit $\rho\to\infty$. In the limit, the only constraint on the representations is that $| \braket{\tilde{\mypsi}_x}{\tilde{\mypsi}_{x'}}| = 0$ whenever $| \braket{{\mypsi}_{x}}{{\mypsi}_{x'}}|=0$. Hence, when $\rho\to\infty$, the set of possible representations is precisely the same considered by Lov\'asz \cite{lovasz-1979}, and we thus have $\vartheta(\rho)\to\vartheta$ as $\rho\to \infty$. So, the value of $\vartheta(\rho)$ moves from the cut-off rate $R_1$ to the Lov\'asz bound $\vartheta$ when $\rho$ varies from $1$ to $\infty$.
This clearly implies that the bound of Corollary \ref{coroll} is finite for all $R>\vartheta$ and thus it allows to bound the zero-error capacity of the channel as
\begin{eqnarray}
C_0 & \leq & \lim_{\rho\to\infty}\vartheta(\rho)\\
& = & \vartheta.
\end{eqnarray}

In general, the function $\vartheta(\rho)$ turns out to be strongly related to the coefficient $E_x(\rho)$  used in the expurgated bound of Gallager \cite{gallager-1965} and defined, using our definition of $\ket{\mypsi_x}$,  as
\begin{multline}
E_x(\rho)=
\max_{P} \left[-\rho \log \sum_{x,x'}P(x)P(x') (\braket{\mypsi_x}{\mypsi_{x'}})^{1/\rho}\right].
\end{multline}
In order to present this relation, it is instructive to consider first the so called \emph{non-negative definite} channels as defined by Jelinek \cite{jelinek-1968}. These are channels for which the matrix $\tilde{C}$ with $(i,j)$ element $\tilde{c}_{i,j}=(\braket{\mypsi_i}{\mypsi_j})^{1/\rho}$ is positive semidefinite for all $\rho\geq 1$. For example, the binary symmetric channel (BSC) is non-negative definite. It was proved by Jelinek that, for these channels, the expurgated coefficient $E_x^{(n)}(\rho)$ computed over the $n$-fold extensions of the channel (and normalized to $n$) has the same value as $E_x(\rho)$.
It is also known that for these channels, the inputs can be partitioned in subsets such that all pairs of symbols from the same subset are confusable and no pair of symbols from different subsets are confusable. The zero error capacity in this case is simply the logarithm of the number of such subsets.
For these channels, since the matrix $\tilde{C}$ is positive semidefinite, there exists a set of vectors $\tilde{\mypsi}_1,\tilde{\mypsi}_2,\ldots,\tilde{\mypsi}_K$ such that $\braket{\tilde{\mypsi}_i}{\tilde{\mypsi}_j}=\tilde{c}_{i,j}$, that is, for all $\rho\geq 1 $, representations of degree $\rho$ exist that satisfy all the constraints with equality. 
In this case, the equivalence with the cut-off rate that we have seen for $\rho=1$ can be in a sense extended to other $\rho$ values. 
It can be proved \cite[Th. 9]{dalai-QSP-2012} that we can write 
\begin{align}
\vartheta(\rho) = &\min_{\myf}\max_x\log \frac{1}{|\braket{\myf}{\tilde{\mypsi}_x}|^2}\\
 & =  \max_{P}\left[- \log \sum_{x,x'}P(x)P(x') \braket{\tilde{\mypsi}_x}{\tilde{\mypsi}_{x'}}\label{eq:alt_def_theta}\right]\\
  & =  \max_{P} \left[- \log \sum_{x,x'}P(x)P(x') (\braket{{\mypsi}_x}{{\mypsi}_{x'}})^{1/\rho}\label{eq:alt_def_theta2}\right]
\end{align}
Hence, under such circumstances, we find that $\vartheta(\rho)=E_x(\rho)/\rho $. For example, for the BSC with transition probability $\varepsilon$, we have
\begin{equation}
\vartheta(\rho)=-\log \left( \frac{1}{2}+\frac{1}{2}[4\varepsilon(1-\varepsilon)]^{1/2\rho}\right).
\end{equation}
In general, for non-negative definite channels, the bound of Corollary \ref{coroll} is obtained by drawing the curve parameterized as $(E_x(\rho)/\rho, 2 E_x(\rho))$ in the $(R,E)$ plane. Thus, it is seen that that bound is loose in general. It is however somehow tight in the sense that, in this particular case,
\begin{align}
C_0 = \lim_{\rho\to\infty} \vartheta(\rho),
\end{align}
(which is however trivial) and, if $C_0=0$, the bound gives
\begin{eqnarray}
E(0) & \leq & \lim_{\rho\to\infty} 2\rho\,\vartheta(\rho)\\
& = & \lim_{\rho\to\infty} 2E_x(\rho)\\
& = & 2 E_{ex}(0),
\end{eqnarray}
where $E_{er}(R)$ is Gallager's expurgated lower bound to $E(R)$.
If the channel is pairwise reversible, this can then be improved to $E(R)\leq E_{ex}(0)$, which is obviously tight.

For general channels with a non-trivial zero-error capacity, like for example any channel whose confusability graph is a pentagon, what happens is that the matrix $\tilde{C}$ is in general positive semidefinite only for values of $\rho$ in a range $[1,\bar{\rho}]$ and then it becomes not positive semidefinite for some  $\rho>\bar{\rho}$. This implies that for $\rho>\bar{\rho}$, representations that satisfy all the constraints with equality do not exist in general. In this case, the two expressions in equations \eqref{eq:alt_def_theta} and \eqref{eq:alt_def_theta2} are no more equal and in general they could both differ from $\vartheta(\rho)$. If all the values $\braket{\tilde{\mypsi}_x}{\tilde{\mypsi}_{x'}}$ are nonnegative\footnote{We conjecture that the optimal representation, in terms of Lov\'asz's definition of \emph{value}, always satisfies this condition. We have not yet investigated this aspect, but have never found a counterexample.}, however, then it can be proved that the expression in \eqref{eq:alt_def_theta} equals $\vartheta(\rho)$ \cite[Th. 9]{dalai-QSP-2012}. In this case, we see the interesting difference between $\vartheta(\rho)$ and $E_x(\rho)/\rho$. The two 	quantities follow respectively \eqref{eq:alt_def_theta} and \eqref{eq:alt_def_theta2}. When $\rho\to\infty$, $\vartheta(\rho)$ tends to $\vartheta$, an upper bound to $C_0$. The value $E_x(\rho)/\rho$ instead is known to converge to the independence number of the confusability graph of the channel \cite{korn-1968}, a lower bound to  $C_0$.

More generally, if $\braket{\tilde{\mypsi}_x}{\tilde{\mypsi}_{x'}}\geq 0$,  $\forall x,x'$ , since $\vartheta(\rho)$ is given by equation \eqref{eq:alt_def_theta}, it is an upper bound to \eqref{eq:alt_def_theta2} and thus to $E_x(\rho)/\rho$. It can then also be proved that \eqref{eq:alt_def_theta} is multiplicative, in this case, over the $n$-fold tensor power of the representation $\{\tilde{\mypsi}_k\}$. This implies that, for all $n$, $\vartheta(\rho)$ is an upper bound to the (normalized) expurgated bound $E_x^{(n)}(\rho)/\rho$ computed for the $n$-fold memoryless extension of the channel. That is, $\vartheta(\rho)$ generalizes $\vartheta$ in the sense that, in the same way as
\begin{equation}
\vartheta \geq \sup_{n}\lim_{\rho\to \infty} \frac{E_x^{(n)}(\rho)}{\rho}=C_0,
\end{equation} 
also
\begin{equation}
\vartheta(\rho) \geq \sup_{n}\frac{E_x^{(n)}(\rho)}{\rho}.
\end{equation} 
The discussion of this point with generality requires some technicalities and will hopefully be given in a future work (see footnote 2).

Is is worth pointing out that, for some channels, the optimal representation may even stay fixed for $\rho$ larger than some given finite value $\rho_{\max}$ and $\vartheta(\rho)$ is thus constant for $\rho\geq \rho_{\max}$ (in this case, the bounds are useless for $\rho>\rho_{\max}$). This happens for the famous example for the noisy typewriter channel with five inputs and crossover probability $1/2$. In this case $\bar{\rho}=\rho_{\max}\approx 2.88$; as shown in Fig. \ref{fig}, for $\rho<\rho_{\max}$ we have $\vartheta(\rho)=E_x(\rho)/\rho$ while, for $\rho\geq \rho_{\max}$, $\vartheta(\rho)=C_0=\log\sqrt{5}$.

\begin{figure}
\includegraphics[width=\linewidth]{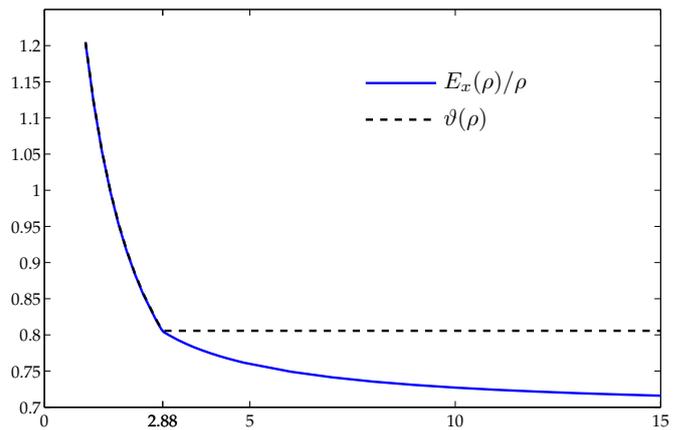}
	\caption{Plot of $\vartheta(\rho)$ and $E_x(\rho)/\rho$ for the noisy typewriter channel with five inputs and crossover probability $1/2$.}
\label{fig}
\end{figure}

%
We close the paper with a comment on the relation between the results presented here and some recent results in quantum information theory. In that context, it is revealed \cite{dalai-ISIT-2013a} that Lov\'asz's idea is intimately related to the sphere-packing bound of \cite{shannon-gallager-berlekamp-1967-1}.
The bound to $E(R)$ derived here is a special case of a more general bound that can be derived by properly applying the sphere-packing bound for classical-quantum channels \cite{dalai-ISIT-2012}, \cite{dalai-QSP-2012}.
In particular, while the construction of the representation $\{\tilde{\mypsi}_x\}$ was introduced here as a purely mathematical trick to bound $E(R)$, this procedure can be interpreted in the context of classical-quantum channels as a natural way to bound $E(R)$ by comparing the original channel with an auxiliary one. 


\bibliographystyle{IEEEtran}


\end{document}